\numberwithin{equation}{section}
\newtheorem{theorem}{Theorem}[section]
\newtheorem{lemma}[theorem]{Lemma}
\theoremstyle{definition}
\newtheorem{definition}{Definition}[section]
\newcommand{\E}{\mathbb{E}}
\newcommand{\Z}{ {\mathbb Z} }
\newcommand{\N} {{\mathbb N}}
\newcommand{\R}{\mathbb{R}}
\DeclareMathOperator{\indfct}{\rm 1}
\DeclareMathOperator{\dist}{dist}
\DeclareMathOperator{\diam}{diam}
\newcommand{\bx}{\mathbf{x}}
\newcommand{\by}{\mathbf{y}}
\newcommand{\balpha}{\boldsymbol{\alpha}}
\date{September 29, 2009}
\begin{document}

\title{\LARGE Complete Dynamical Localization in Disordered\\  
Quantum Multi-Particle Systems\footnote{
Presented by S.~Warzel at the XVI International Congress of Mathematical Physics,  Prague 2009.} 
}



\author{   \Large Michael Aizenman$^{(a)}$   \   and   \  \  
 Simone Warzel$^{(b)}$
  \\[2ex]    
{ $^{(a)}$ Departments of Physics and Mathematics }  \\
{ Princeton University,  Princeton NJ 08544, USA  } \\[2ex]
{ $^ {(b)}$  Zentrum Mathematik, TU M\"unchen}\\ 
{   Boltzmannstr. 3, 85747 Garching, Germany} 
\\[1ex]     
  }

\maketitle

\begin{abstract}
We present some recent results concerning the persistence of dynamical localization for disordered systems of $ n $ particles under weak 
interactions.
\end{abstract}


\section{Introduction}\label{aba:sec1}

After more than half a century, Anderson localization continues to attract the interest of a broad spectrum of researchers ranging from experimentalists, 
who finds its effects in  systems of cold atoms and in photonic crystals \cite{AI09}, to mathematical physicists.  
Considerable progress was made in the rigorous methods for  the study of the localization effects of disorder in the context of the one-particle theory~\cite{St00,Kis,AiWa_RSO}.  
 More recently attention has centered on the role of interactions, and questions related to the persistence of the localization picture in the presence of inter-particle interactions.   We shall report here on some progress which was made in that area.


\subsection{The one-particle theory} 

The discussion of weakly interacting particles often starts  from the approximation in which the interactions are ignored.  That is, one first considers systems of Fermions, or Bosons possibly with on-site repulsion, subject to a fixed potential which includes random terms.   Such system can be understood in terms of the one-particle theory.  


For particles moving on a lattice $ \mathbb{Z}^d $, the single-particle Hamiltonian may take the form
\begin{equation}\label{eq:defH}
	H^{(1)}(\omega) := - \Delta \, + \, \lambda \, V(x;\omega) \, , \qquad \mbox{in $ \ell^2(\mathbb{Z}^d) $,} 
\end{equation}
where $-\Delta$ is the lattice Laplacian 
and the random potential incorporates a disorder parameter $ \lambda \geq  0 $. 
For the convenience of presentation, all mathematical statements in this note are made under the assumption that the random potential  $ V(x;\omega) $  takes  
 independent and 
indentically distributed values on the lattice sites $ x \in \mathbb{Z}^d $, with a distribution
 $\,  \mathbb{P}(V_x \in dv ) = \varrho(v) \, dv \, $ of a bounded, compactly supported density.
 

When talking about localization for operators such as $ H^{(1)}(\omega) $, different notions have been used and established:
\begin{description}
\item[Spectral localization] in an energy regime $ I \subset \mathbb{R}$, refers to the statement that within the specified energy regime $  H^{(1)}(\omega) $   has only pure point spectrum, 
with exponentially localized eigenfunctions, cf.~\cite{St00} and references therein.
\item[Dynamical localization] in $ I \subset \mathbb{R}$,   refers to the statement that any initially localized state, which  is a wave packet with energies in $ I \subset \mathbb{R}$, will 
 remain exponentially localized indefinitely under the time evolution.   A convenient sufficient condition is that 
for some $ A,  \xi \in (0, \infty) $ (both depending on $ I $, $ \lambda $), 
\begin{equation*}\label{eq:dynloc}
\mbox{(DL)}\qquad\qquad  \mathbb{E}\left[\sup_{t\in \mathbb{R}} \big| \langle \delta_y \, , e^{-itH^{(1)}} P_I(H^{(1)})\, \delta_x \rangle \big|^2\right] \leq A \, e^{-|x-y|/\xi} 
	\qquad\qquad 
\end{equation*}
where $ \delta_x \in \ell^2(\mathbb{Z}^d) $  is the  $\delta$ state at $ x  $ and $ P_I(H^{(1)})  $ is a  spectral projection. 
\end{description}
While spectral localization provides some coarse information on the quantum time evolution via the RAGE theorem, dynamical localization in the sense of~(DL) is a somewhat stronger notion.   
Historically,  spectral localization was the first to be rigorously established by building, for $d>1$ at  extreme energies or large disorder, 
on the multiscale analysis of Fr\"ohlich and Spencer~\cite{FrSp83}; cf.~\cite{St00,Kis}.  
The first proof of dynamical localization \cite{A2} relied on the fractional moment analysis of \cite{AM}. 

It is one of the features of the single particle model \eqref{eq:defH} that in any dimension there is an extreme disorder regime, with $\lambda $ exceeding some finite $ \lambda_1$,  in which  one has 
{\bf complete dynamical localization}, meaning that {\it all} states are localized and  
dynamical localization~(DL) holds with $ I = \mathbb{R} $.
%
%

For the single particle model~\eqref{eq:defH} in $ d = 1 $ complete dynamical localization with critical disorder strength $ \lambda_1 = 0 $ 
was established  already in~\cite{KuSou80}. For higher dimensions, $ d \geq 1 $, an explicit bound on $ \lambda_1 $ can be found in \cite{A2}.  

\subsection{The influence of interaction on localization: a challenge} 
It is not hard to see that the above localization properties are inherited by a system of non-interacting particles with a one-particle Hamiltonian corresponding to~\eqref{eq:defH}.  
An important question however, is whether such behavior will persist under the addition of interparticle interactions.  
Especially interesting is the situation where there are $n$ fermions in a region of volume $|\Lambda|$, with 
$|\Lambda| \to \infty$ and  $n/|\Lambda| \to \rho >0$.  

A rather strong claim is being advocated by Basko, Aleiner and Altshuler~\cite{BAA06}, who argue that \emph{under strong disorder weak interactions do not change the qualitative picture of localization  as it is seen in the non-interacting model}.    

The proposal is rather startling.  It includes the claim that if the system of particles of an overall positive density is started at a initial state at which the distribution in space of particles and energy is far from uniform, its irregularity will persist indefinitely under the time evolution.   This seems to run against the vague \emph{equidistribution principle}, by which one expects that 
except under unusual circumstances, such as in the non-interacting integrable model, the initial state will evolve in time (in a weak enough sense) towards states which maximize the entropy, in a coarse-grained sense, subject to the given energy and particle number constraints.  

%

%

Persistence of localization is not undisputed among physicist. In particular,    
long-range repulsive interactions are conjectured to have a delocalizing effect~\cite{Shep}. 

Rigorous methods  are still  far from allowing one to decide whether complete localization will persist or perish in the presence of interactions.  
Furthermore,  the analysis of even  a fixed number of particles with short range interactions in the infinite-volume limit has presented difficulties, and we will
report on some recent progress \cite{ChSu3,CS09,AW09} made in this direction. 


As an side we note  that  the  dynamics of multiparticle systems bear some relation   to non-linear evolutions.  Recent results on that topic  are discussed in~\cite{WZ}. 

\section{Dynamical localization for multi-particle systems} 
\subsection{The $ n $-particle Hamiltonian}
We will be concerned with a system of finitely  many interacting particles in the random potential described above. 
The $ n $-particle Hamiltonian is given by 
\begin{equation}\label{eq:defHm}
H^{(n)}(\omega) := \sum_{j=1}^n \left[- \Delta_j \, + \, \lambda \, V(x_j;\omega) \right]\, + {\mathcal U}(\bx;\balpha) , \qquad 
\mbox{in \quad $ \ell^2(\mathbb{Z}^d)^n $} 
\end{equation}
acting in the Hilbert space over all configurations $ \bx =(x_1, \dots, x_n) \in (\mathbb{Z}^d)^n $. 
Here the last term is a $ p $-site interaction of range $ \ell_U < \infty $:
$${\mathcal U}(\bx;\balpha)  \ :=\ \sum_{k=1}^{p}  \alpha_k  \mkern-10mu \sum_{
\substack{ 
A\subset \Z^d:  |A|=k \\ 
{\rm diam}A \le \ell_U
}  }  \mkern-5mu   U_A( (N_{u}(\bx))_{u\in A})  \, . 
$$
It is described in term of an interaction parameter $ \balpha = (\alpha_1,\dots,\alpha_p) \in \R^p$ and a function
 $ U_A : \mathbb{N}^{|A|} \to \R $ which is bounded by some $n$-dependent constant, $ \| U_A \| \leq c_n < \infty$,  and translational invariant. The latter 
depends on the number $N_{u}(\bx) := \sum_{j=1}^n \delta_{u,x_j} $ of particles of the configuration $ \bx =(x_1, \dots, x_n) $, which are at sites $ u \in A$ 
in the pattern $A \subset \Z^d $. 

Simply stated examples which are already of interest  are short range pair interactions for which 
$ p = 2 $  and 
$ U_{\{u,v\}}\left(  N_{u}(\bx), N_{v}(\bx) \right) =    N_{u}(\bx)  N_{v}(\bx)   \, \delta_{| u - v |, 1}$.

\subsection{The main result}
Our main result establishes the existence, for any dimension and any number of particles, of a regime $ \mathcal{L}^{(p)}_n $ in the space of basic parameters of the model, $(\lambda, \balpha) \in \mathbb{R}_+ \times \mathbb{R}^{p} $, for which
complete dynamical localization with a uniform localization length occurs for up to $ n $ particles. 
The proof gives an inductive algorithm for the  construction of  such  localization regimes, albeit at what may be a possibly far too restrictive manner.   

The first proof of spectral localization for two particles was presented in  \cite{CS09}, using the multiscale approach.  The result presented below was derived by different means, which allow also rather simple control of the dynamical localization.  

The localization regime covered by the result presented below may be best described in 
terms of the extreme sets it includes, namely:

\begin{description}
   \item[Strong disorder]  for  each $\balpha \in \R^{p}$ there is $\lambda(\balpha )$ such that the localization regime   $\mathcal{L}^{(p)}_n $ includes the cone in the parameter space $ \mathbb{R}_+ \times \mathbb{R}^{p} $ 
   where the interaction strengths are dominated by $\balpha$, and the disorder strength exceeds  $ \lambda(\balpha)$.

  \item[Weak interactions]   for any  $\lambda > \lambda_1 $, i.e. disorder strength at which the one-particle Hamiltonian exhibits complete localization,  
   there are  $\balpha_j(\lambda) > 0$
 , $j=\{1,...,p\}$, 
  such that $ \mathcal{L}^{(p)}_n$ includes all $(\lambda, \balpha') $ for which 
  $|\balpha_j'| \le  |\balpha_j (\lambda)|$ componentwise.  
    \end{description}
Thus,  
the weak localization region includes some neighborhood of the {\it entire}  localization regime $ (\lambda_1,\infty) \times \{ \mathbf{0} \} $ of the $n$ particle 
unperturbed system. In case $ d = 1 $, $\lambda_1 = 0$, so this includes all positive values of~$ \lambda $. 

Having introduced these notions we may present the result of \cite{AW09}: 
\begin{theorem}[\!\!\cite{AW09}]\label{eq:main}
For each $ n , p \in \N$ there is an open set $\mathcal{L}^{(p)}_n \subset \R_+\times \R^{p} $ 
which  includes regimes of strong disorder and weak interactions, for which 
at some $ A ,\xi <\infty $ and all $(\lambda, \balpha) \in\mathcal{L}^{(p)}_n$, $ k \in \{ 1, \dots , n \} $,   and all $\bx,\, \by \in (\Z^d)^k$:
\begin{equation} \label{loc_f}
\E{\left[\sup_{ t \in \mathbb{R}}\left|\langle \delta_\bx \, ,  \, e^{-itH^{(k)}} \, \delta  _\by \rangle \right|^2\right]}   \ \le \  A \, 
e^{-  \rm{dist}_{\mathcal H}(\bx,\by) / \xi}   \, ,
\end{equation}
where the exponential decay is in terms of the Hausdorff pseudo-distance between the configurations  $ \bx = (x_1, \dots , x_k) $ and $ \by = (y_1, \dots , y_k) $  
\begin{equation} \label{eq:distH}
\dist_{\mathcal H}(\bx, \by) 
\ := \ \max\left\{ \max_{1\le i \le k} \dist(x_i,\, \by), \   \max_{1\le i \le k} \dist(y_i,\,\bx)
\right\}    
\end{equation} 
\end{theorem}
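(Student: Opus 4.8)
The plan is to use the fractional moment method (FMM) of \cite{AM,A2}, transplanted to the configuration space $(\Z^d)^k$. The first reduction replaces the dynamical quantity on the left of \eqref{loc_f} by a bound on the Green's function $G^{(k)}(\bx,\by;z) := \langle \delta_\bx, (H^{(k)}-z)^{-1}\delta_\by\rangle$. Via the eigenfunction-correlator bound the supremum over $t$ is controlled by $\sum_E |\langle \delta_\bx, P_{\{E\}}(H^{(k)})\delta_\by\rangle|$, whose expectation is in turn dominated, for some $s\in(0,1)$, by $\liminf_{\eta\downarrow 0}\int_\R \E[|G^{(k)}(\bx,\by;E+i\eta)|^s]\,dE$ over the (bounded) spectrum. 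Hence it suffices to establish a uniform exponential bound
\[
\E\big[|G^{(k)}(\bx,\by;z)|^s\big] \ \le\ A\, e^{-\dist_{\mathcal H}(\bx,\by)/\xi}, \qquad z\in\C\setminus\R,
\]
with $A,\xi$ depending only on $(n,p)$ and on $(\lambda,\balpha)\in\mathcal L^{(p)}_n$; since complete localization is sought, the decay must hold uniformly over all energies. The regularity $\varrho\in L^\infty$ supplies, as the usual starting point, the a priori bound $\E[|G^{(k)}(\bx,\by;z)|^s]\le C_k<\infty$, uniform in $z$ and in the configurations.

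The decay bound is proved by induction on the particle number $k$, the case $k=1$ being the single-particle complete dynamical localization valid for $\lambda>\lambda_1$ (with $\lambda_1=0$ when $d=1$). For the inductive step I introduce the graph on the particles of a configuration that joins two of them whenever their separation does not exceed the interaction range $\ell_U$. If $\by$ (or $\bx$) is \emph{disconnected} in this sense, it splits into clusters $\by=(\by',\by'')$ separated by more than $\ell_U$; then the interaction $\mathcal U$ does not couple the two groups, so $H^{(k)}$ differs from the decoupled sum $H^{(k')}\oplus H^{(k'')}$ only through the single-particle hopping $-\Delta_j$ that can carry a boundary particle across the gap. A geometric resolvent identity then expresses $G^{(k)}$ through the decoupled resolvents, estimated by the induction hypothesis on $k'<k$ and $k''<k$ particles, together with boundary hopping factors, each of which forces a single particle to traverse a long distance and is therefore exponentially damped by the one-particle bound.

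The \emph{connected} configurations are the crux. Here no splitting into fewer particles is available, and I would set up a self-consistent (Simon--Lieb type) inequality for the fractional moment at Hausdorff scale $L$: iterating the resolvent equation along a path that decreases $\dist_{\mathcal H}(\bx,\by)$, every elementary step either moves a single particle, gaining the one-particle decay factor, or invokes the interaction, gaining a factor proportional to $|\balpha|$ times the a priori bound. In the regime $\mathcal L^{(p)}_n$ this produces a contraction: at large $\lambda$ the per-step single-particle factor is small (\textbf{strong disorder}), while for $\lambda>\lambda_1$ fixed and $|\balpha|$ small the interaction-driven steps are suppressed (\textbf{weak interactions}), so that the admissible open set is nonempty and contains both extreme regimes. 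Summing the resulting geometric series over paths, and absorbing the combinatorial cost of enumerating them into $\xi$, yields the claimed exponential decay and propagates the induction from $k-1$ to $k$.

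The step I expect to be the main obstacle is precisely this connected case, together with the bookkeeping in the \emph{Hausdorff} metric. Because $\dist_{\mathcal H}$ is only a pseudo-distance, insensitive to permutations of particles and to repeated coordinates, a large Hausdorff distance guarantees only that \emph{some} particle of one configuration lies far from the \emph{entire} other configuration, not that the configurations are globally separated. One must therefore argue that this single far-lying particle can be peeled off and transported by a genuinely one-particle mechanism, so that the decay it carries is at least $\exp(-\dist_{\mathcal H}/\xi)$; making this peeling compatible with the resolvent expansion, and keeping the constants uniform over $k\le n$ as the induction is iterated, is where the real work lies. The passage back from the Green's-function bound to \eqref{loc_f} is then routine given the uniformity in $z$.
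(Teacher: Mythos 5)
Your overall blueprint (fractional moments of the Green function, reduction via the eigenfunction correlator, induction on the particle number with a dichotomy of configurations) matches the paper's strategy, but both halves of your inductive step contain genuine gaps. In your disconnected case, the difference between $H^{(k)}$ and the decoupled $H^{(k')}\oplus H^{(k'')}$ is \emph{not} a single-particle hopping term across a spatial gap: each $-\Delta_j$ acts on its own coordinate and is already contained in the tensor factors. The perturbation is the \emph{inter-cluster interaction}, a bounded potential supported on the strip of width $2\ell_U$ about the diagonal in configuration space, and the expansion is in powers of $\mathcal{U}$ (cf.\ Fig.~\ref{fig2} and \eqref{eq:boundcluster}); note also that the paper's dichotomy is diameter versus Hausdorff separation, not connectivity at range $\ell_U$. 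More seriously, your plan to estimate the decoupled resolvent ``by the induction hypothesis'' founders on the fact that fractional moments do not factorize over the subsystems: both see the \emph{same} random potential, $G^{(J,K)}$ is a contour-integral convolution of $G^{(J)}$ and $G^{(K)}$ with correlated eigenvalues, and $G$ is unbounded. The paper circumvents this by switching to the eigenfunction correlator, which is bounded and submultiplicative, $Q^{(J,K)}_\Lambda \leq Q^{(J)}_\Lambda\, Q^{(K)}_\Lambda$ as in \eqref{eq:ecpartial}, and by the equivalence lemma that transfers exponential decay between $Q$ and $\widehat{\mathbb{E}}_I\left[|G|^s\right]$. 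Without this device (or an equivalent one) your split-case step does not close.

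The connected/clustered case is where your proposal would actually fail. A per-step Simon--Lieb contraction along paths in configuration space cannot work in the weak-interaction regime: at fixed $\lambda>\lambda_1$ the hopping factors are $O(1)$ per step (per-step smallness holds only at large $\lambda$), so suppressing the interaction-driven steps does not make the geometric series converge. Even at strong disorder, the step-by-step resampling that the fractional moment iteration needs is blocked by the correlations you never address: $V(\bx;\omega)=\sum_j V(x_j;\omega)$ takes correlated values at configurations sharing a site --- the ``infinite range'' obstruction the paper flags explicitly. Likewise, ``peeling off'' the Hausdorff-far particle by a one-particle mechanism fails precisely because that particle may sit inside a tight cluster that moves coherently as a quasi-particle, whose transport is governed by the full interacting $k$-particle Green function rather than the one-particle bound; this quasi-particle formation is exactly the phenomenon the clustered case must control. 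The paper's solution is of a different nature: a finite-volume rescaling criterion in the spirit of \cite{ASFH} for the boundary quantity $B^{(n)}_s(L)$, namely $B^{(n)}_s(2L)\leq \frac{a}{|\lambda|^{s}}\,B^{(n)}_s(L)^2 + A\,L^{2p}e^{-2\nu L}$ as in \eqref{eq:rescaling}, whose proof exploits that the randomness in two boxes of side $L/2$ around the two clustered configurations is genuinely independent by spatial separation, and whose error term is controlled by the non-clustered bound \eqref{eq:boundcluster}. The two extreme regimes then enter exactly where your sketch has nothing: strong disorder makes $\frac{a}{|\lambda|^{s}}B^{(n)}_s(L)$ small at a fixed scale $L$, while for weak interactions one uses $B^{(n)}_s(L)\to 0$ as $L\to\infty$ at $\balpha=\mathbf{0}$ (from \cite{AM,ASFH} applied to the non-interacting system) together with continuity of the finite-volume quantity in $\balpha$ near $\mathbf{0}$ --- a seed for the contraction that no path expansion supplies.
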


\noindent As was mentioned, spectral  localization  follows from \eqref{loc_f}  by the Wiener criterion.   



\subsection{Comments on the result} \label{comments}

Theorem~\ref{eq:main} is formulated for distinguishable particles. For  {\it Fermions}, or {\it Bosons}, the relevant Hilbert spaces are subspaces of the space which this theorem covers, and thus the results apply by restriction.  Furthermore, {\it hard core interactions}  can also be added without  requiring  any modification in the proof of the  extended  statement.


It is worth pointing out that for systems of  $n>2$ particles some subtleties show up in the decay rate seen in~\eqref{loc_f}.  
A natural decay rate  for systems with permutation symmetry is the symmetrized distance: 
$
 \dist_S(\bx,\by) \ := \ \min_{\pi \in S_n}  \sum_{j=1}^n |x_j-y_{\pi j}| \,  ,  
$ 
where $S_n$ is the permutation group of 
$\{1, ..., n\}$.  In contrast, $ \dist_{\mathcal H}(\bx,\by)$ is not a metric, if  $n>2$. 
The Hausdorff pseudo-distance bounds~\eqref{loc_f} allow for the possibility that if the initial configuration had some  particles  within the localization distance from each other  then such `excess charge' may transfer among the occupied regions.
 \begin{figure}[h]
\begin{center}
\includegraphics[width=.4\textwidth] {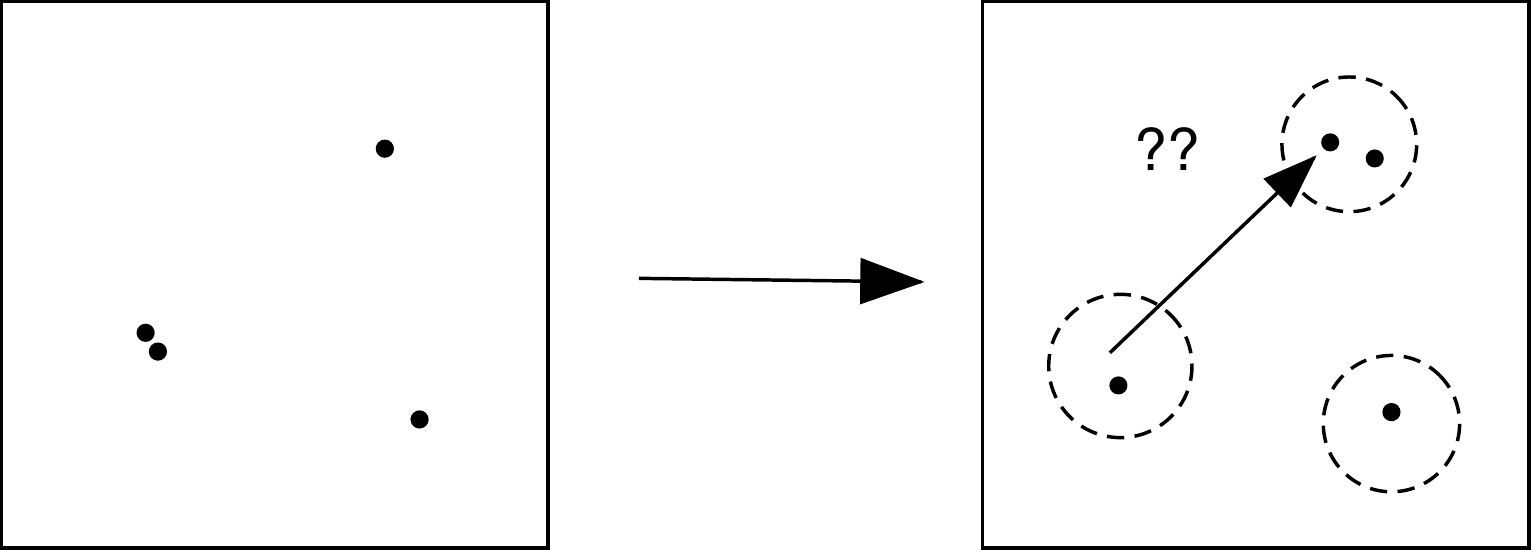}\\[1ex]
\caption{\small A transition  between two configurations  which are close in the Hausdorff pseudo-distance, but not otherwise.}
\end{center} 
\end{figure}

One could   wonder whether Theorem~\ref{eq:main} does not immediately follow from the existing localization results.
After all,  a configuration of $ n $ particles in $ \Z^d $ may be regarded as a single particle in $ \mathbb{Z}^{dn}$   evolving  under a  Hamiltonian which is  of the form
$
H^{(n)}(\omega) = - \Delta^{(nd)} + U(\bx) + \lambda \, V(\bx;\omega) 
$
with a random potential $ V(\bx;\omega) = \sum_{j=1}^n V(x_j;\omega)  $.   
However, the random potential $V(\bx;\omega) $  does not take independent values at configurations with one coinciding particle and in this sense is of \emph{infinite range}.  
The number of its degrees of freedom is only a fractional power~($\frac{1}{n}$) of the number of configuration.   
The technical problem which this causes  is behind the above mentioned limitation of the bound~\eqref{loc_f}.   

\section{Outline of the proof}

The proof of Theorem~\ref{eq:main} cannot be fitted in this short summary, but we may comment on  the flow of the argument \cite{AW09}.  It  consists of three steps:
\begin{enumerate}
\item  Proof of the finiteness of fractional moments of the Green function. 
\item  Elucidation of the relation of a multiparticle  \emph{eigenfunction correlator} with the Green function's fractional moments.
\item Inductive construction of domains of uniform localization in parameter space.
\end{enumerate}
The first is a  \emph{sine qua non} condition for the  
fractional moment technique for localization~\cite{AM}.  
It is related to the celebrated Wegner estimate, which however  does not play a direct role in the analysis.    The proof is rather easy, as is also the case  for the   Wegner estimate for many-particle systems.

The second step follows the strategy by which 
dynamical localization is established for single particles \cite{A2}, for which some adjustments to the multi-particle setup are required.
 
Once the first two preparatory steps are taken, the third step  constitutes the core of the argument.   



\subsection{Finiteness of fractional moments of the Green function }

An essential tool for the analysis is the finite-volume, $ \Lambda \subset \mathbb{Z}^d $, Green function 
\begin{equation}
G^{(n)}_\Lambda(\bx,\by;z) := \big\langle \delta_\bx , \big( H^{(n)}_\Lambda - z\big)^{-1} \delta_\by \big\rangle \, ,
\end{equation}
with $  H^{(n)}_\Lambda $ denoting the restriction of \eqref{eq:defHm} to $ \ell^2(\Lambda)^n $. 
The first step towards \emph{exponential bounds} is to prove finiteness  of the fractional moments.
In fact, even a conditional average makes them finite. 
\begin{lemma}\label{prop:Wegner}
For any $ s \in (0,1) $ there is $ C  < \infty $ such that for   
any $ \Lambda  \subseteq \mathbb{Z}^{d}$, any two (not necessarily distinct) sites $ u_1,u_2 \in \Lambda $ and any pair  of configurations
$ \bx ,\by$,  which have a  particle at $ u_1 $, $ u_2 $ respectively,
\begin{equation}\label{eq:Wegner}
	\mathbb{E}\left( \left| G^{(n)}_\Lambda(\bx,\by;z)\right|^s \, \Big| \, \left\{ V(v) \right\}_{v\not\in \{u_1,u_2\}} \right) \leq \frac{C}{|\lambda|^s}  
\end{equation}
for all $ z \in \mathbb{C} $, and $ \lambda \neq 0 $. 
\end{lemma}
\begin{proof}[Sketch of proof]
 In its dependence on $V(u_1)$ and $V(u_2)$,  the Hamiltonian is of the form: 
$ H^{(n)}_\Lambda = A + \lambda \, V(u_1)  \, N_{u_1} + \lambda \, V(u_2)  \, N_{u_2} $, 
where $ \left(N_u \psi \right)(\bx) := \sum_{k=1}^n \delta_{x_k,u} \, \psi(\bx)  $ stands for the number operator.
The assertion is therefore implied by the weak-$L^1$ estimate from \cite{AENSS}: 
\begin{equation*}
	\int_{\mathbb{R}^2} \!\! \indfct \left[  \left| \left\langle \sqrt{N}\phi , \left( \xi \, N+ \eta \, M   -  K \right)^{-1} \sqrt{M}\psi \right\rangle \right| > t \right]  \varrho(\xi) \, \varrho(\eta) \, d\xi \, d\eta 
		\leq \frac{C[\varrho]}{t} \; \| \phi \| \, \|\psi \| \, , 
\end{equation*}
with operators $ N, M \geq 0 $ and $ K $ dissipative.
\end{proof}
The finiteness~\eqref{eq:Wegner} is the analogue of a {\it Wegner estimate} in the multi-scale method. As has been noticed in~\cite{ChSu2,Ki08}, 
 it
follows similar to the one-particle case from the monotonicity used in the above proof.

\subsection{Eigenfunction correlator and its relation to Green function}

For the finite-volume Hamiltonian $ H^{(n)}_{\Lambda} $, we define the \emph{eigenfunction correlator}, associated with
an energy regime $ I \subset \mathbb{R} $, as the sum 
\begin{equation}
	Q^{(n)}_{\Lambda}(\bx,\by;I) := \mkern-20mu 
	\sum_{E  \in \sigma(H^{(n)}_{\Lambda})\cap  I }  \mkern-15mu 	 \left| \langle \delta_\bx \, ,  P_{\{E\} }(H^{(n)}_\Lambda) \, \delta_\by \rangle \right|  \, , 
\end{equation}
where $P_{\{E\} }$  is a spectral projection. 
Ignoring some subtleties related to the passage to the infinite-volume limit $ \Lambda \uparrow \mathbb{Z}^d $ (which are discussed in detail in~\cite{A2,AW09}),  
the elementary bound $\left| \langle \delta_\bx, e^{-it H^{(n)}_{\Lambda})} \delta_\by \rangle \right| \leq Q^{(n)}_{\Lambda}(\bx,\by;\mathbb{R}) $
shows that the eigenfunction correlator is an important tool for establishing dynamical localization \cite{KuSou80,A2}. \\

The following key lemma states the equivalence of exponential decay in the Hausdorff pseudo-distance (though not in the symmetric distance) of the eigenfunction correlator 
and of the Green function's fractional moments.\\  
\begin{lemma}
The following statements are equivalent:
\begin{enumerate}
	\item There is $ A, \xi < \infty $ such that for all $ \bx , \by \in (\Z^d)^k $: 
	\begin{equation}
\sup_{\substack{I \subset \R }} \sup_{\Lambda \subset \mathbb{Z}^d}  \E{\left[Q_{\Lambda}^{(k)}(\bx,\by;I)\right]}   \ \le \  A \; 
e^{-  \dist_\mathcal{H}(\bx,\by) /\xi  } \,  .
\end{equation}
	\item There is $ A, \xi < \infty $ and $ s \in (0,1 ) $ such that for all $ \bx , \by \in (\Z^d)^k $: 
	\begin{equation}
	 \sup_{\substack{I \subset \R \\ |I| \geq 1}} \sup_{\Lambda \subset \mathbb{Z}^d}\frac{1}{ |I|}  \underbrace{\int_I \mathbb{E}\left[ \left| G^{(k)}_\Lambda(\bx,\by;E)\right|^s    \right] dE}_{=: \, \widehat{\mathbb{E}}_I\left[ \dots \right]}   \ \leq \ 
	A\,  e^{-\dist_\mathcal{H}(\bx,\by)/\xi} \, . 
	\end{equation}
\end{enumerate}
\end{lemma}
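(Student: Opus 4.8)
The two displayed bounds differ only in whether the spatial decay is recorded through the eigenfunction correlator or through the energy-averaged fractional moments of the Green function; the Hausdorff pseudo-distance $\dist_{\mathcal{H}}(\bx,\by)$ enters both in the same way and is merely transported through the estimates. The statement is thus the $k$-particle analogue of the single-particle equivalence between these two forms of localization \cite{A2,AM}, and I would reduce it to that case, using the regularity of Lemma~\ref{prop:Wegner} as the only genuinely probabilistic ingredient. It is convenient first to note that $H^{(k)}_\Lambda$ is bounded uniformly in $\Lambda$ and in $k\le n$ (the Laplacian, the compactly supported potential, and the bounded finite-range interaction are all bounded operators), so its spectrum lies in a fixed interval $I_0$, which I enlarge if necessary so that $|I_0|\ge 1$. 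Consequently $Q^{(k)}_\Lambda(\bx,\by;\R)=Q^{(k)}_\Lambda(\bx,\by;I_0)$ and every energy integral may be restricted to $I_0$.

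For the implication $(2)\Rightarrow(1)$ --- the direction needed downstream, since the inductive Step~3 produces decay of the fractional moments and this lemma turns it into the correlator bound and hence into dynamical localization --- the plan is to prove the pointwise-in-$I$ inequality
\[
\E\left[Q^{(k)}_\Lambda(\bx,\by;I)\right]\ \le\ C_s\int_I\E\left[\left|G^{(k)}_\Lambda(\bx,\by;E+i0)\right|^s\right]dE ,
\]
and then to take $I=I_0$ and divide by $|I_0|\ge 1$ to meet the normalized quantity of statement~(2). This reproduces the mechanism by which dynamical localization is derived for one particle in \cite{A2}: the spectral projections are recovered from the boundary values of the resolvent, and the resulting moments are tamed by averaging the potential at a site occupied by $\bx$ or by $\by$. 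The one point where the multi-particle setting intervenes is that this single-site variation is a perturbation by the number operator $N_u$, which is no longer of rank one; I would therefore run the argument with the monotone, weak-$L^1$ spectral averaging of \cite{AENSS} already used for Lemma~\ref{prop:Wegner}, which furnishes exactly the bound required, uniformly in $\Lambda$ and in $k$.

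For the converse $(1)\Rightarrow(2)$ the plan is to start from the representation $|G^{(k)}_\Lambda(\bx,\by;E+i0)|\le\int_\R |E-E'|^{-1}\,|\nu_{\bx\by}|(dE')$, where $|\nu_{\bx\by}|$ is the total-variation measure of the signed spectral measure $\langle\delta_\bx,P_{dE'}\delta_\by\rangle$ and satisfies $|\nu_{\bx\by}|(I')=Q^{(k)}_\Lambda(\bx,\by;I')$ on every interval $I'$. Since $s<1$, the kernel $|E-E'|^{-s}$ is locally integrable, so the energy average $\frac{1}{|I|}\int_I|G|^s\,dE$ is controlled by a fractional average of $Q^{(k)}_\Lambda$ over a neighbourhood of $I$; the contribution of energies $E'$ close to $E$, where the resolvent is singular, is absorbed using the a priori finiteness of Lemma~\ref{prop:Wegner}. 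Interpolating the two inputs --- correlator decay for the bulk and the volume-independent fractional-moment bound for the singular part --- then yields the exponential decay of statement~(2), with $\xi$ replaced by $\xi/s$.

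The main obstacle is shared by both directions and is precisely what Lemma~\ref{prop:Wegner} is designed to overcome. Near the spectrum the Green function is singular, and any naive conversion between $|G|^s$ and $Q$ brings in the number of eigenvalues of $H^{(k)}_\Lambda$ in the relevant window, hence a factor growing with the volume $|\Lambda|$; keeping all constants uniform in $\Lambda$ forces one to route every such singular contribution through the volume-independent conditional bound of Lemma~\ref{prop:Wegner}. In the multi-particle setting this is compounded by the fact that the single-site perturbation is the non-rank-one operator $N_u$, so the classical rank-one (Simon--Wolff) spectral averaging must be replaced throughout by the monotone estimate of \cite{AENSS}. By contrast, the passage from one to many particles and the Hausdorff pseudo-distance cause no difficulty at this stage: the distance appears identically on both sides, and the genuinely multi-particle complications are deferred to the inductive construction of Step~3.
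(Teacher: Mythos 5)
Your direction $(2)\Rightarrow(1)$ contains the genuine gap, and it sits exactly at the step you dispatch in one sentence. The pointwise linear inequality $\E\bigl[Q^{(k)}_\Lambda(\bx,\by;I)\bigr]\le C_s\int_I \E\bigl[|G^{(k)}_\Lambda(\bx,\by;E+i0)|^s\bigr]\,dE$ is a \emph{rank-one} fact: the derivation in \cite{A2} parametrizes the eigenvalue branches $E_j(v)$ of $H_0+v\,|\delta_x\rangle\langle\delta_x|$ by the single-site potential and changes variables $v\mapsto E_j(v)$ with Jacobian $\partial E_j/\partial v=|\langle\delta_x,\psi_j\rangle|^2$ --- precisely the eigenfunction amplitude occurring in the correlator, which is what makes the $dv$-average of $Q$ collapse into an energy integral of $|G|^s$. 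In the multiparticle setting the single-site perturbation is $v\,N_u$, whose rank is of order $n|\Lambda|^{n-1}$; the branch Jacobian becomes $\langle\psi_j,N_u\psi_j\rangle$, which no longer matches $|\langle\delta_\bx,\psi_j\rangle\langle\psi_j,\delta_\by\rangle|$, and the change-of-variables identity collapses. The weak-$L^1$ estimate of \cite{AENSS} does not substitute for it: it controls sandwiched \emph{resolvents} $\langle\sqrt{N}\phi,(\xi N+\eta M-K)^{-1}\sqrt{M}\psi\rangle$ --- which is why it yields Lemma~\ref{prop:Wegner} --- but it says nothing about sums of absolute values of eigenprojection matrix elements, which are not boundary values of any single resolvent. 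Bridging exactly this point is the new technical content of \cite{AW09}: there the correlator is bounded by the energy-averaged fractional moments only up to a fractional power (via a dichotomy/Chebyshev-type argument rather than an identity), with a corresponding degradation of the localization length. This is why the lemma is phrased as an equivalence of exponential-decay \emph{classes}, with possibly different $(A,\xi)$ on the two sides, and not as the pointwise bound you assert; transplanting \cite{A2} ``with \cite{AENSS} substituted throughout'' papers over the one step that genuinely fails for $n>1$.

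Your direction $(1)\Rightarrow(2)$ is essentially correct but the stated mechanism needs repair: for $s<1$ you cannot push the fractional power through the integral against $|\nu_{\bx\by}|$, since Jensen's inequality for the concave map $t\mapsto t^s$ runs the wrong way, and local integrability of $|E-E'|^{-s}$ alone does not produce the claim (the naive substitute $\sum_j|a_j|^s$ \emph{exceeds} $Q$ and reintroduces the volume factor you rightly want to avoid). The correct pathwise tool is the weak-$L^1$ (Boole) bound for the Stieltjes transform of the finite measure $|\nu_{\bx\by}|$, which via the layer-cake formula gives, for each disorder realization,
\begin{equation*}
\int_I \bigl|G^{(k)}_\Lambda(\bx,\by;E)\bigr|^s\,dE \ \le\ C_s\,|I|^{1-s}\,Q^{(k)}_\Lambda(\bx,\by;\R)^s\,,
\end{equation*}
whence $\frac{1}{|I|}\int_I\E\bigl[|G|^s\bigr]dE\le C_s\,\bigl(\E\bigl[Q^{(k)}_\Lambda(\bx,\by;\R)\bigr]\bigr)^s$ by Jensen in the probability space, using $|I|\ge1$; this yields statement~(2) with localization length $\xi/s$, as you predicted, and does not in fact require Lemma~\ref{prop:Wegner} at all. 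With this replacement your easy direction stands; the hard direction requires the genuinely new correlator bound of \cite{AW09}.
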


In the proof of the main result we repeatedly change horses between estimates on the kernels of $G$ and  $Q$.  The kernel $Q(\bx,\by) $ is what is ultimately need, and  
it is also a very convenient tool for the induction step.  
Yet, $G(\bx,\by)$ has a more convenient perturbation theory.



In the induction step described below one puts together two mutually non-interacting subsystems and then turns on the interaction between them.   Thus, we start with the system partitioned into 
$ J, K \subset \{ 1 , \dots , n \} $, $ J \cap K = \emptyset $, and   a  Hamiltonian of the form   
$ H^{(J,K)}_\Lambda := H^{(J)}_\Lambda \oplus H^{(K)}_\Lambda $ acting on  $ \ell^2(\Lambda)^{|J|} \otimes \ell^2(\Lambda)^{|K|} $.  Then: 

\begin{itemize}
\item The Green function of the composite system  is a 
 convolution, 
 \begin{equation}
 	 G^{(J,K)}_\Lambda(\bx,\by;z) = 
 \int_\mathcal{C} G^{(J)}_\Lambda(\bx_J,\by_J;z-E) \, G^{(K)}_\Lambda(\bx_K,\by_K;E) \; \frac{dE}{2\pi i }  
\end{equation}
involving a contour integral encircling the spectrum of $ H^{(K)}_\Lambda $.  Moment estimates for the combined system are  complicated by the unboundedness of $G$ and possible correlations between the  eigenvalues of the subsystems. 
\item On the other hand,  the eigenfunction correlator is bounded and  we may use:
\begin{equation}\label{eq:ecpartial}
	Q^{(J,K)}_\Lambda(\bx,\by;I) 
		\leq \,  Q^{(J)}_\Lambda(\bx_J,\by_J; \mathbb{R}) \; Q^{(K)}_\Lambda(\bx_K,\by_K; \mathbb{R}) \, ,  
\end{equation}
by which the exponential decay of the subsystems  passes to the joint system. 
\end{itemize}
\subsection{Inductive construction of domains of  localization in parameter space}
In view of the equivalence of the exponential decay in the Hausdorff distance of the eigenfunction correlator 
and of the fractional moments of the Green function, it is natural to define localization regimes as  follows. 
\begin{definition} \label{def:loc}
An open subset of the parameter space, $\mathcal{L} \subset \R_+\times \R^{p}$ is said to be a {\it domain of uniform $n$-particle localization}
if for some $s\in (0,1)$ there exists $\xi  < \infty$ and $ A  < \infty $ such that 
$$
(\mbox{\rm UL}) \qquad \qquad \sup_{\substack{I \subset \R \\ |I| \geq 1}} \sup_{\Lambda \subset \mathbb{Z}^d} \frac{1}{ |I|}  \int_I \mathbb{E}\left[ \left| G^{(k)}_\Lambda(\bx,\by;E)\right|^s    \right] dE \ \leq \ 
	A\,  e^{-\dist_\mathcal{H}(\bx,\by)/\xi} \, . 
	$$
 holds for all $ (\lambda,\balpha) \in \mathcal{L} $, all $k\in \{1,...,n\}$, and all  $\bx, \by \in (\Z^d)^k$. 
\end{definition}
Our aim is to inductively construct $ \mathcal{L}_{n}^{(p)} $ starting from the domain $  \mathcal{L}_{1}^{(p)} = (\lambda_1,\infty) \subset \mathbb{R}_+ $ of uniform one-particle localization,  whose existence is guaranteed in \cite{A2} (and \cite{KuSou80} for $ d = 1 $).
For the induction step we assume that $ \mathcal{L}_{n-1}^{(p)} $ is a domain of uniform $ (n-1) $-particle localization and proceed by distinguishing 
two cases:
\begin{enumerate}
\item{\bf Localization for non-clustered configurations}.
Here we deal with two configuration $ \bx, \by $ of which at least one  is of   diameter ($ \diam(\bx) \ := \ \max_{j,k}\,   |x_j - x_k|  $) comparable with their Haussdorff distance.  A key lemma provides the bound:
\begin{multline}\label{eq:boundcluster}
	\qquad \quad  \sup_{\substack{I \subset \mathbb{R} \\ |I| \geq 1}} \sup_{\Lambda \subseteq \mathbb{Z}^d }\, 
	\widehat{\mathbb{E}}_I\left[ |G_\Lambda^{(n)}(\bx,\by)|^{s}\right] \\
	 	\ \leq \  A  \, \exp\left( - \frac{1}{\xi} \min\left\{ 
		 	\dist_{\mathcal H}(\bx,\by), \frac{\max\{\diam(\bx), \diam(\by)\}}{n-1}\right\} \right)  
\end{multline}
for all $ (\lambda, \balpha ) \in \mathcal{L}_{n-1}^{(p)} $, (which, strictly speaking, is only valid under the additional assumption of $\mathcal{L}_{n-1}^{(p)}  $ being sub-conical, cf.~\cite{AW09}).  
\end{enumerate}
\begin{figure}[h]
\begin{center}
\begin{minipage}{.45\textwidth}
 \includegraphics[height=.2\textheight,width=.8\textwidth]{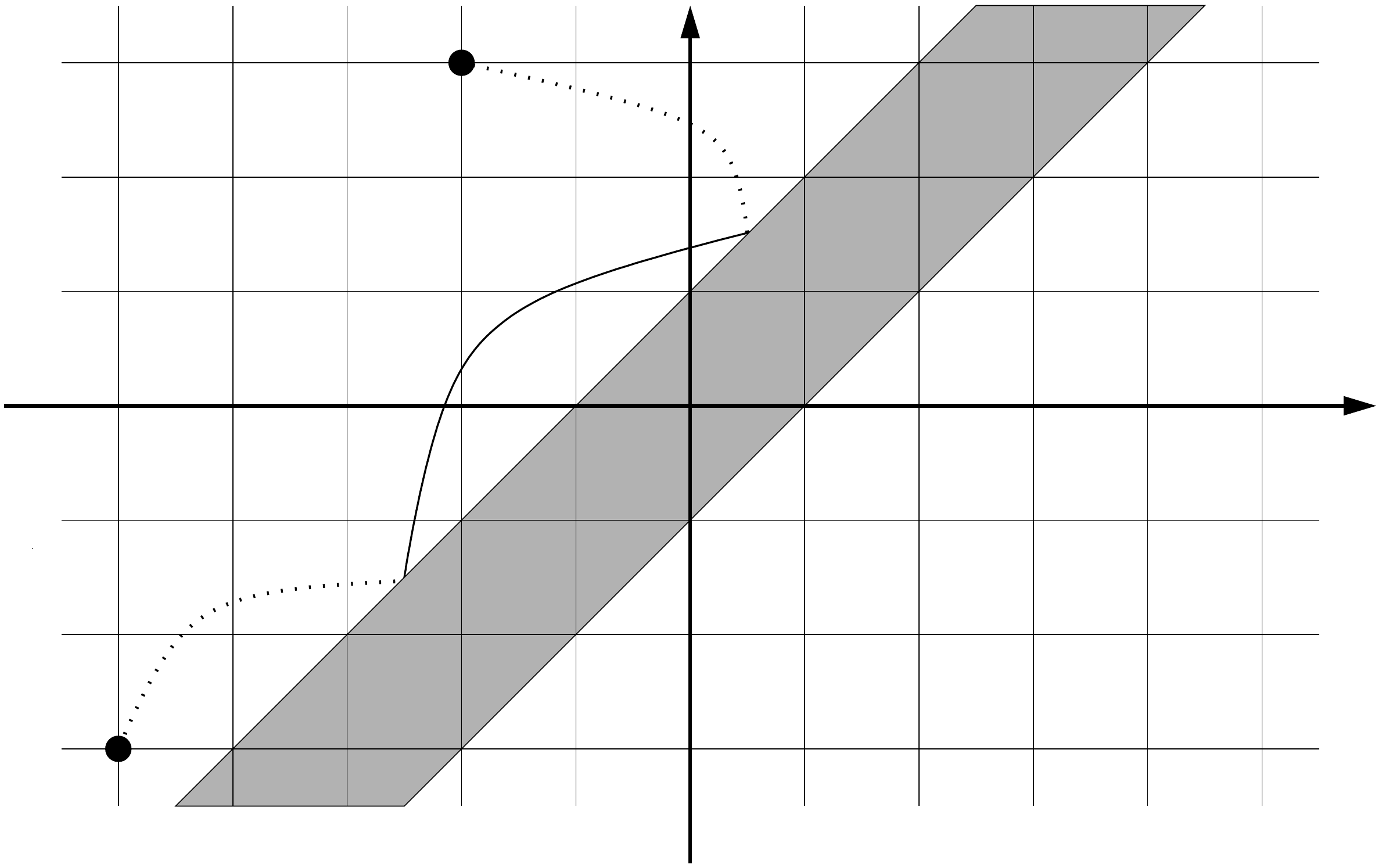}
\end{minipage}\hfill\begin{minipage}{.5\textwidth}
\caption{\small The solid line is the full  and
the dotted lines represent the Green function of the partially non-interacting system which, by assumption and due to arguments based on~\eqref{eq:ecpartial} decay exponentially. Due to finite range, the interaction is localized  to the shaded strip of width $2 \ell_U $ about the diagonal.}\label{fig2}
\end{minipage}
\end{center}
\end{figure}
The proof of~\eqref{eq:boundcluster} proceeds by breaking the configuration in two parts and removing the \emph{inter}cluster interaction. 
Standard perturbation theory then yields an expansion in $ U $ which in case  $ n = 2 $, $ d = 1 $ is pictorially explained in {\it configuration space} in Fig~\ref{fig2}.

The bound~\eqref{eq:boundcluster} is rather crude in its dependence on the particle number. It is the main reason why our bounds on the 
localzation length degrade rapidly with~$ n $.

\begin{enumerate}
\item[(ii)] {\bf Localization for clustered configurations}.  This refers to configurations with diameter less than half their separation.    The issue which is to be addressed is the possible formation of a quasi-particle which is not constrained by previous localization bounds.

   A convenient quantity to monitor is 
\begin{equation}
 B^{(n)}_s(L) := \sup_{\substack{ I \subset \mathbb{R} \\ |I | \geq 1 }}
  \sup_{\Omega \subseteq \Lambda_L} \,  |\partial \Lambda_L|\,  \sum_{y\in \partial \Lambda_L} \sum_{\substack{\bx \in\mathcal{C}^{(n)}_{L}(\Omega;0)\\ \by \in\mathcal{C}^{(n)}_{L}(\Omega;y)} } 
 	\widehat{\mathbb{E}}_I\left[ \left|  G^{(n)}_\Omega(\bx,\by) \right|^{s} \right] \, , 
\end{equation}
where  $ \mathcal{C}^{(n)}_{L}(\Lambda;x)\; $ denotes the collection of configurations of diameter less than $ L/2  $ and at least one particle at $ x $. \\ 

\noindent
\begin{minipage}{.3\textwidth}
 \includegraphics[height=.13\textheight,width=.8\textwidth]{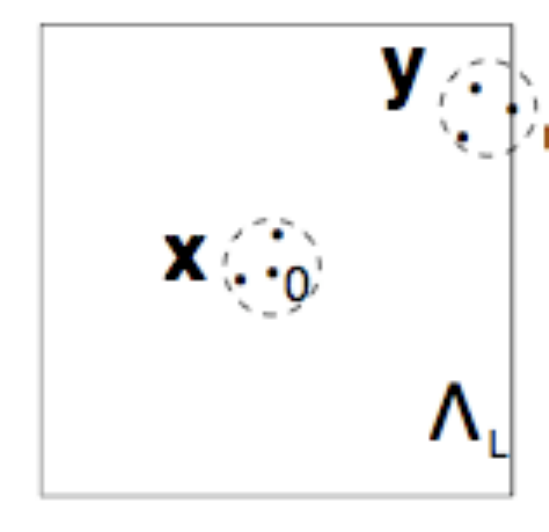}
\end{minipage}\hspace*{-3ex}
\begin{minipage}{.6\textwidth}
\begin{itemize}
\item one has at least one particle at $ 0 $, and
\item the other one has at least one particle at the boundary $ \partial \Lambda_L $ of the box; cf. illustration in {\it real space}.
\end{itemize}
\end{minipage}\\
\end{enumerate}

Localization bounds for clustered configurations proceed through {\it rescaling inequalities} for  $  B^{(n)}_s(L)  $.    The key statement is:

\begin{lemma}
 There exists $ s \in (0,1) $, $ a, A  , p < \infty $, and $ \nu > 0 $ such that 
\begin{equation}\label{eq:rescaling}
	 B^{(n)}_s(2L ) \leq \frac{a}{|\lambda|^{s} }\, B^{(n)}_s(L)^2 + A  \, L^{2p} \, e^{- 2\nu L} \
\end{equation}
for all $(\lambda,\balpha) \in \mathcal{L}^{(p)}_{n-1} $,  (which is again assumed to be sub-conical, cf.~\cite{AW09}). 
\end{lemma}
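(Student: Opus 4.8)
The plan is to derive the rescaling inequality~\eqref{eq:rescaling} by a single geometric resolvent step that decomposes a passage across the doubled box $\Lambda_{2L}$ into two consecutive passages at scale $L$, each reproducing a factor of $B^{(n)}_s(L)$, with the coupling of the two factors paid for by the a priori conditional bound of Lemma~\ref{prop:Wegner}. Concretely, I would place a concentric inner box $\Lambda_L \subset \Lambda_{2L}$ and, for any $\Omega \subseteq \Lambda_{2L}$, write the resolvent identity relating $H^{(n)}_\Omega$ to the decoupled operator $H^{(n)}_{\Omega\cap\Lambda_L} \oplus H^{(n)}_{\Omega\setminus\Lambda_L}$. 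Because $-\Delta$ is nearest-neighbour, the two blocks are joined only through a hopping term supported on bonds crossing $\partial\Lambda_L$, so for $\bx$ a clustered configuration with a particle at $0$ and $\by$ a clustered configuration with a particle on $\partial\Lambda_{2L}$ one obtains
\[
 G^{(n)}_\Omega(\bx,\by) \ = \ -\sum_{\bu,\bv}\, G^{(n)}_{\Omega\cap\Lambda_L}(\bx,\bu)\, t(\bu,\bv)\, G^{(n)}_\Omega(\bv,\by)\,,
\]
the sum running over pairs $\bu,\bv$ that differ by a single particle hopping across $\partial\Lambda_L$. Since $\diam(\bx) < L/2$, the source cluster lies deep inside $\Lambda_L$, so the first factor is a genuine scale-$L$ traversal from the centre to $\partial\Lambda_L$; the residual configuration $\bv$ sits just outside $\partial\Lambda_L$ and still has to reach $\partial\Lambda_{2L}$, a further distance of order $L$.

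I would then apply the fractional-moment subadditivity $|\sum_i a_i|^s \le \sum_i |a_i|^s$ for $s\in(0,1)$, average over the disorder, and decouple the two surviving Green functions by conditioning on the potential away from the sites occupied by the crossing particle and invoking~\eqref{eq:Wegner}. This step supplies the prefactor $a/|\lambda|^s$ and leaves two expectations of single Green functions; summing over the intermediate boundary configurations $\bu,\bv$ then reassembles one copy of $B^{(n)}_s(L)$ from each of the two passages, so that the estimate closes as $\tfrac{a}{|\lambda|^s}\,B^{(n)}_s(L)^2$. Here the $|\partial\Lambda_L|$ prefactor built into the definition of $B^{(n)}_s$ is exactly what absorbs the count of crossing bonds at the middle interface, and translation invariance of both the potential and the interaction is used to re-centre the second passage so that it, too, matches the definition of $B^{(n)}_s(L)$.

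The error term $A\,L^{2p}e^{-2\nu L}$ collects the complementary contributions in which the cluster fails to traverse as a unit, i.e.\ the intermediate configuration $\bv$ has spread out to diameter comparable with its displacement and is therefore no longer clustered. Such configurations are precisely the non-clustered ones of case~(i), and the bound~\eqref{eq:boundcluster}, valid on the sub-conical domain $\mathcal{L}^{(p)}_{n-1}$, furnishes an exponential factor $e^{-\nu L}$ for each of the two passages, hence $e^{-2\nu L}$, while $L^{2p}$ counts the admissible boundary sites and clustered configurations. The hard part will be the decoupling: because an $n$-particle configuration is a single point of $(\Z^d)^n$ carrying a potential of infinite range, the two Green functions share randomness in the inner box and cannot be separated by mere independence. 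The conditional Wegner estimate of Lemma~\ref{prop:Wegner} is what rescues the argument, but one must check that conditioning does not destroy the clustering hypothesis, so that the feared quasi-particle remains genuinely confined to the two scale-$L$ steps; iterating~\eqref{eq:rescaling} from an initial scale at which $B^{(n)}_s(L_0) < |\lambda|^s/a$ then drives $B^{(n)}_s$ to zero super-exponentially and yields uniform localization for the clustered $n$-particle sector.
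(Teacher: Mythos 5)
There is a genuine gap at the decoupling step, and it sits exactly where you flag ``the hard part''. After your single resolvent expansion at $\partial\Lambda_L$, the two surviving factors $G^{(n)}_{\Omega\cap\Lambda_L}(\bx,\bu)$ and $G^{(n)}_{\Omega}(\bv,\by)$ share the randomness in the inner box, and your plan --- conditioning on the potential away from the sites occupied by the crossing particle and invoking \eqref{eq:Wegner} --- can only bound \emph{one} of them: the conditional expectation of a single Green function is at most $C/|\lambda|^s$, and after it is spent, that factor's spatial decay is gone. The mechanism therefore closes as $\frac{C}{|\lambda|^s}\, B^{(n)}_s(L)$, a \emph{linear} recursion, not the quadratic $\frac{a}{|\lambda|^s}\, B^{(n)}_s(L)^2$ the lemma asserts --- and the quadratic structure is essential, since in the weak-interaction regime one only knows $B^{(n)}_s(L)$ is small at one scale, and it is the square that makes smallness self-improving. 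To obtain the square, the expectation of a \emph{product} of two scale-$L$ Green functions must genuinely factorize, and no conditioning at finitely many sites achieves this, precisely because of the infinite-range correlations in configuration space that you yourself cite.

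The paper's route is in fact the opposite of the one you propose. One expands at \emph{two} separating surfaces, producing three factors; the two-site conditional bound of Lemma~\ref{prop:Wegner} (note it is stated for two sites $u_1,u_2$ for exactly this purpose) is applied to the \emph{middle}, full-volume factor, which supplies the prefactor $a/|\lambda|^s$; the two outer factors are then measurable with respect to the potential in two boxes of side $L/2$ placed around the clusters of $\bx$ and of $\by$. The crucial observation --- which you explicitly dismiss with ``cannot be separated by mere independence'' --- is that the clustering hypothesis makes these two boxes \emph{disjoint} subsets of $\Z^d$, so the potentials in them are honestly independent and the expectation of the product factorizes into $B^{(n)}_s(L)^2$. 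This is also where the diameter restriction really enters, and it corrects your account of the error term: in the paper, $A\,L^{2p}e^{-2\nu L}$ arises from restricting the configurations $\bx,\by$ summed in the definition of $B^{(n)}_s(2L)$ to diameter less than $L/4$, the discarded pairs satisfying $\max\{\diam(\bx),\diam(\by)\}\geq L/4$ and being controlled by the non-clustered bound \eqref{eq:boundcluster}; it is not primarily a remainder from intermediate configurations in the expansion spreading out. Your opening geometric-resolvent step and the identification of \eqref{eq:Wegner} as the source of $a/|\lambda|^s$ are sound in spirit, but as written the argument cannot produce \eqref{eq:rescaling}.
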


Part of the proof of this lemma resembles the proof of finite-volume criterial for one particle localization   in \cite{ASFH}. 
In view of the infinite-range correlations 
of the random potential in configuration space discussed at the beginning of this section, a crucial observation is the following. 
When placing two 
boxes of length $ L/2 $ about configurations $ \bx $ and $ \by $ in the above picture, the random variables associated with those boxes are independent 
due to the spatial separation. 
Moreover, the error when restricting the sum in the definition of $ B^{(n)}_s(2L) $ to configurations $ \bx , \by $ which have a diameter less than $ L/4 $ can be controlled by~\eqref{eq:boundcluster} yielding the second term on the right side in~\eqref{eq:rescaling}.\\

It is not hard to see that rescaling inequalities such as~\eqref{eq:rescaling} imply exponential decay provided the quantity $ \frac{a}{|\lambda|^{s} }\, B^{(n)}_s(L) $ is small on some scale $ L$. 
This is the requirement which determines $\mathcal{L}^{(p)}_{n} $. Namely, 
\begin{itemize}
\item in case of
 {\it strong disorder} and arbitrary value of the interaction $ \balpha $ we choose $L$ fixed and $ \lambda $ small enough. 
 \item in case of {\it weak interaction} and arbitrary value of the disorder $ \lambda > \lambda_1 $ we appeal to \cite{AM,ASFH} which imply that
  $ B^{(n)}_s(L) \to 0 $ as $ L \to \infty $ for $ \balpha = \mathbf{0} $. Since the finite-volume quantity $ B^{(n)}_s(L) $ is continuous in some neighborhood of $ \balpha = \mathbf{0} $ we choose $ L $ and $ \balpha $ accordingly. 
 \end{itemize}

\section{Some remaining challenges}
 
While Theorem~\ref{eq:main} is formulated for interactions of finite range, its proof allows for extension to interactions of with exponential falloff.  However, it it does not address  questions about the effects of Coulomb interactions.

An important outstanding challenge it to resolve the question which is commented upon in Section~\ref{comments}: does  localization persists (with uniform bounds)  even for  large systems at  a positive density of particles?     
%

\section*{Acknowledgments}
This work was partially supported by the National Science Foundation under grants DMS-0602360 (MA), DMS-0701181 (SW) and a Sloan Fellowship~(SW).  

\bibliographystyle{ws-procs975x65}
\bibliography{ws-pro-sample}

\end{document}